\documentclass[aps,twocolumn]{revtex4-1}
\usepackage{amsmath,url,multirow,geometry}                
\usepackage{graphicx}
\usepackage{amssymb}
\usepackage{amsthm}
\usepackage{epstopdf}
\DeclareGraphicsRule{.tif}{png}{.png}{`convert #1 `dirname #1`/`basename #1 .tif`.png}

\newtheorem{theorem}{Result}

\setlength{\topmargin}{-0.0cm}
\setlength{\textwidth}{16.5cm}
\setlength{\textheight}{20.5cm}

\begin{document}

\title{Interplay between positive feedbacks in the generalized CEV process}
\author{ St. Reimann$^1$   \& V. Gontis$^2$ \& M. Alaburda$^2$}
\email{streimann@ethz.ch}
\affiliation{$^1$Department of Management, Technology and Economics, ETH Zurich\\
Kreuzplatz 5, CH-8032 Zurich, Switzerland\\
$^2$ Institute of Theoretical Physics and Astronomy of Vilnius University\\
A. Gostauto 12,LT-01108 Vilnius,Lithuania
}

\begin{abstract}
The dynamics of the {\em generalized} CEV process $dX_t = aX_t^n dt+ bX_t^m dW_t$   $(gCEV)$ is due to an interplay of two feedback mechanisms: State-to-Drift and State-to-Diffusion, whose degrees are $n$ and $m$ respectively. We particularly show that the gCEV, in which both feedback mechanisms are {\sc positive}, i.e. $n,m>1$, admits a stationary probability distribution $P$ provided that $n<2m-1$, which asymptotically decays as a power law $P(x) \sim \frac{1}{x^\mu}$ with tail exponent $\mu = 2m > 2$. Furthermore the power spectral density obeys $S(f) \sim \frac{1}{f^\beta}$, where $\beta = 2 - \:\frac{1+\epsilon}{2(m-1)}$, $\epsilon>0$. Bursting behavior of the gCEV is investigated numerically. Burst intensity $S$ and burst duration $T$ are shown to be related by $S\sim T^2$.
 \end{abstract}
 \maketitle

The dynamics of the state  $X_t$ of a system which is open to a rapidly fluctuating environment can be described by the non-linear stochastic differential equation
\begin{equation}
dX_t =  f(X_t) \: dt \; + \; g(X_t) dW_t ,
\end{equation} 
$W_t$ the standard Wiener process, under the assumption that 1.) noise enters linear, and 2.) the White Noise approximation is valid, see \cite{HorsthemkeLefever1984}. The drift and the diffusion `coefficients' are depending on the recent state $X_t$ and hence represent `State-to-Drift' or `State-to-Diffusion' feedbacks, respectively. The resulting dynamics and consequently properties like the stationary pdf of the gCEV, the spectral density, and also burst statistics are shown to be due to the interplay between these two feedback mechanisms. \\

The following (informal) argument shows that if both feedback mechanisms $f(X)$ and $g(X)$ have a particular functional relation to each other, given by
\begin{equation}\label{cond}
f(X) = \alpha g(X) g'(X) 
\end{equation}
then their interaction generates a power-law like stationary probability distribution - if it exists - in that 
\begin{equation}\label{PL}
P(x) \sim \frac{1}{g(x)^{2(1-\alpha)}} 
\end{equation}
(Here and in the following the notion $F(x) \sim f(x)$ means that the function $F(x) = f(x)$ for large $x$.)
Note that the proportionality factor $\alpha$ enters the coefficient of the power-law tail. The process  considered in \cite{RuseckasKaulakys2010}, $dX_t = b^2 \left( m - \frac{\lambda}{2}\right) X_t^{2m-1} dt + b X_t^m dW_t$, corresponds to $g(X) = X^m$ and $\alpha = 1-\frac{\lambda}{2m}$ so that the stationary pdf decays as a power-law according to $P(x) \sim \frac{1}{x^\lambda}$.

\section{The generalized CEV process  ({\bf gCEV})} 

In the following we consider a particular setting which is that the drift and diffusion coefficients obey
$f(X_t) = a \: X_t^n$ and $g(X_t) =  b\: X_t^m$. In this case one obtains the Ito diffusion process with positive drift and diffusion parameters $a,b$ given by
\begin{equation}\label{gCEV}
dX_t \;=\; a \: X_t^n  dt \; + \; b X_t^m \: dW_t, \qquad n,m > 0.
\end{equation}
This process is a generalization of the {\sc Constant - Elasticity - of - Variance} model, $dX_t = a X_t \: dt+ b \: X_t^m \: dW_t$, which was originally proposed by Cox, Ingersoll and Ross to describe the dynamics of interest rates in an equilibrium economy and which plays an important role in Mathematical Finance, see references below. The gCEV process, eqn \ref{gCEV} describes dynamics by the superposition of {\em two} different feedback scenarios: 
One is the {\bf State-to-Drift feedback} incorporated in the deterministic part of the gCEV 
\begin{equation}\label{det}
\frac{dX_t}{dt} = a X_t^n,
\end{equation}
while the other one concerns {\bf State-to-Diffusion feedback} due to
\begin{equation}\label{stoch}
dX_t = b X_t^m dW_t
\end{equation}
In the following, we will focus on the case that both dynamical components exhibits {\bf positive feedback} simultaneously in that we require
\begin{equation}\label{nm}
1 \; < \; n,m, \; < \infty!
\end{equation}
In this case gCEV dynamics results from the interplay of two {\sc positive} feedback scenarios, each of which generates self-amplifying, i.e. `explosive' behavior in itself. This is easily seen that the drift term with positive feedback gives rise to a Finite-Time-Singularity, i.e. $X_t \to \infty$ as $t \to t_c$, where $t_c = \frac{1}{n-1} X_0^ {1-n}$, while the solution on the finite interval $[0,t_c]$ is $X_t \propto \frac{1}{(t_c - t)^\frac{1}{n-1}}$.
Positive feedback in the state-to-diffusion term also leads to bursting behavior, in that $X_t$ can attain arbitrary large values while it always remains finite. This follows from the fact that the solution of eqn \ref{stoch} is the inverse power of a d-dimensional Bessel process $X_t \propto \frac{1}{\|{\mathbf B}\|_2^\frac{1}{m-1}}$, where ${\mathbf B}$ is a $d$ dimensional Brownian motion, where $d = \frac{2m-1}{m-1} > 2$, for details see \cite{ReimannSornette2010}. Since therefore $d(m)>2$, it follows from the transitivity of ${\mathbf B}_t$, that ${\mathbf B}_t$ escapes to $\infty$ for $t \to \infty$ slower than $t$ a.s., see \cite{BorodinSalminen2002}, while the origin $0$ is polar, i.e. it will not be touched by ${\mathbf B}_t$. On the other hand ${\mathbf B}_t$ has a positive probability to visit any finite $\epsilon$-neighborhood of the origin before escaping. Consequently the dynamics exhibits arbitrary high but finite excursions. 

\begin{figure}[h]
        \includegraphics[width=3.0in]{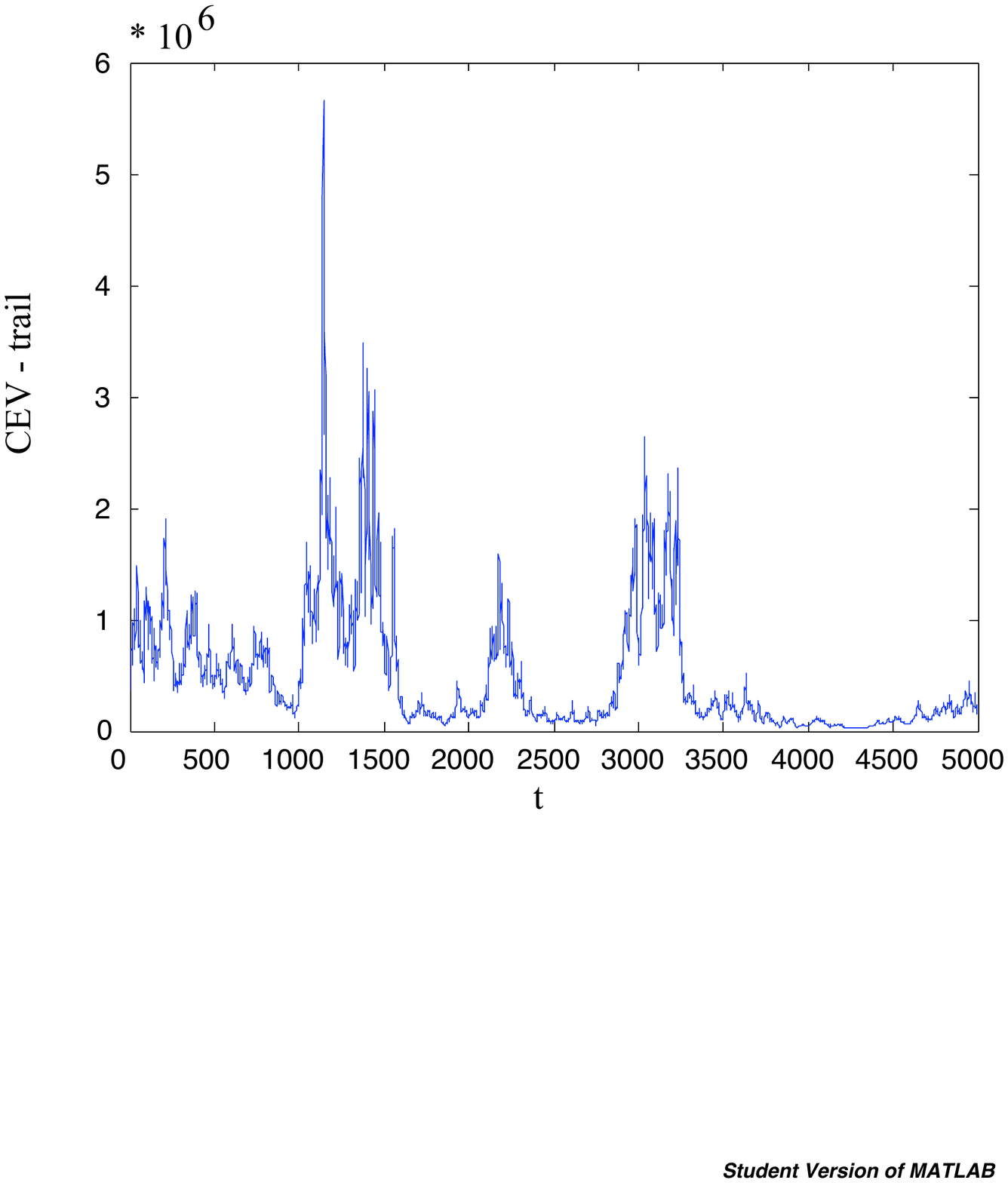}
    \caption{Time trail of the CEV process $dX = b X^\frac{3}{2} dW$}
    \label{fig:CEV_trail} 
 \end{figure}

Hence, for $n,m>1$, both singularities are entirely different: While positive feedback in the state-to-drift component leads to a 'real' Finite-Time-Singularity in that $X_t \to \infty$ within$[0,t_c]$, $X_t$ remains finite even if feedback in state-to-diffusion is positive. Nonetheless, if both positive feedback play in concert, the process exhibits a fat-tailed stationary probability distribution, provided that the State-to-Diffusion feedback is positive $(m>1)$ and strong enough with respect to the State-to-Drift feedback, i.e. $m>\frac{1}{2}(n+1)$. Bursting behavior is reflected in that the stationary pdf decays as a power law $P(x) \sim \frac{1}{x^\mu}$ for large $x$ with an exponent obeying $\mu = 2m > 2$. For large $x$, the tail exponent only depends on the {\em state-to-diffusion} feedback parameter $m$, while the {\em state-to-drift} feedback parameter $n$ determines the pdf only for small $x$.

\section{ Stationary pdf of a generalized CEV process}
\subsection{The CEV process}
\noindent
The standard CEV process is obtained from eqn \ref{gCEV} for $n=1$ 
\begin{equation}\label{CEV}
dX_t \;=\; a \: X_t  dt \; + \; b X_t^m \: dW_t, \qquad a,b > 0,
\end{equation}
A typical time series generated by the CEV process for $m=\frac{3}{2}$ is shown in Fig \ref{fig:CEV_trail}.
An extensive discussion of the CEV process and its relation to other processes including Bessel processes can be found in \cite{JeanblancYorChesney2009, BorodinSalminen2002}. While most of former research on the CEV process has been restricted to the case $0 < m < 1$, we instead focus on the case that state-to-diffusion is subject to positive feedback $m>1$. A detailed discussion of this process and the following theorem as well as the proof can be found in \cite{ReimannSornette2010}. As shown there, a CEV process with $m>1$ is equivalent to a radial Ornstein-Uhlenbeck process for order $\nu(m) = \frac{1}{2(m-1)}$ and hence admits a closed form analytical solution given below:
\begin{theorem}[{\bf Solution of the CEV process for $m > 1$}]\label{MAIN}
The unique and strong solution of the CIR-CEV model, eq \ref{CEV}, with $m>1$ is 
	\begin{equation}
	X(t) \; = \; c(m)  \; { 1 \over {\: \|\mathbf M}(t)\: \|_2^{1/(m-1)}}, 
	\label{X-CIR-CEV}
	\end{equation}
	with $c(m) = \Big(  b (m -1 )\Big)^\frac{1}{1-m}$, where ${\mathbf M(t)}$ be a d-dimensional mean-reverting Ornstein-Uhlenbeck process, whose dimension is a function of the feedback parameter $m$ given by
	\begin{equation}\label{dim}
	\delta(m) \; = \; 2 \; + \; \frac{1}{m-1} \ge 2
	\end{equation} 
	Its components obey $dM_i(t) = -\mu M_i \: dt + dB_i(t)$ with $a\ge 0$ and $B(t)$ the standard Wiener process, while its square norm is $\| {\mathbf M}(t) \|_2$. 
	\end{theorem}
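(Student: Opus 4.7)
The strategy is to verify by direct It\^o computation that the candidate $X(t):=c(m)\|\mathbf{M}(t)\|_2^{-1/(m-1)}$ solves \ref{CEV}, and then transport uniqueness from the (globally Lipschitz) OU SDE to the CEV SDE via the same transformation. First I would compute the squared-radius dynamics. Applying It\^o componentwise to $R_t^2:=\sum_{i=1}^{\delta}M_i(t)^2$ and using independence of the driving Brownian motions gives
\begin{equation*}
dR_t^2 \;=\; \bigl(\delta-2\mu R_t^2\bigr)\,dt \;+\; 2\sum_i M_i\,dB_i .
\end{equation*}
The martingale $\sum_i M_i\,dB_i$ has quadratic variation $R_t^2\,dt$, so L\'evy's characterization rewrites it as $R_t\,d\tilde W_t$ for a single scalar Wiener process $\tilde W$. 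A further application of It\^o to $r\mapsto\sqrt{r}$ yields the one-dimensional radial OU equation
\begin{equation*}
dR_t \;=\; \Bigl[\tfrac{\delta-1}{2R_t}-\mu R_t\Bigr]dt \;+\; d\tilde W_t ,
\end{equation*}
whose origin is polar exactly when $\delta>2$. This already certifies that the candidate $X(t)$ is well defined and strictly positive for all $t\ge 0$.

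Next, I would apply It\^o a second time to $X_t=c(m)R_t^{\kappa}$ with $\kappa:=-1/(m-1)$. The resulting drift contains two contributions that are both proportional to $R_t^{\kappa-2}\sim X_t^{2m-1}$, namely the $\tfrac{\delta-1}{2R_t}$ term in $dR_t$ and the It\^o correction $\tfrac{1}{2}\kappa(\kappa-1)R_t^{\kappa-2}$. Their sum carries the prefactor $\delta+\kappa-2$, which vanishes \emph{identically} for the announced dimension $\delta(m)=2+\tfrac{1}{m-1}$. That algebraic cancellation is the crux of the whole theorem: it eliminates the would-be super-linear drift and leaves only the linear term $-\kappa\mu X_t$. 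Matching coefficients, the diffusion term $c(m)\kappa\,R_t^{\kappa-1}$ simplifies to $-bX_t^m$ precisely when $c(m)^{1-m}=b(m-1)$, reproducing the stated $c(m)=(b(m-1))^{1/(1-m)}$; the drift equals $aX_t$ precisely when the mean reversion rate is $\mu=a(m-1)$. Absorbing the sign of $\tilde W$ into a relabelled Wiener process $W$ produces \ref{CEV} verbatim.

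For the uniqueness and strong solution claim I would run the transformation in reverse. Given any pathwise solution $X$ of \ref{CEV} started at $X_0>0$, the process $R_t:=c(m)^{m-1}X_t^{-(m-1)}$ satisfies the radial OU equation above up to the first hitting time of $\{0,\infty\}$; standard linear-growth SDE theory then yields a unique global strong $\delta$-dimensional OU lift $\mathbf{M}(t)$, which reinverts to a unique strong $X$. The main obstacle is boundary behaviour: one must show $X$ neither explodes nor reaches $0$ in finite time. Both reduce to the corresponding statements for $R$, which follow from $\delta(m)>2$ (polarity of the origin for the radial OU) and the contractive linear drift $-\mu R_t$ (non-explosion), and can be made rigorous by a Feller test applied to the one-dimensional SDE for $R_t$ displayed above.
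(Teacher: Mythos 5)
Your proposal is correct and follows essentially the same route the paper sketches: the power/Lamperti change of variables identifying the CEV process with the radial part of a $\delta(m)$-dimensional Ornstein--Uhlenbeck process, with the dimension $\delta(m)=2+\tfrac{1}{m-1}$ forced by the cancellation of the It\^o correction against the Bessel-type drift. Your explicit coefficient matching ($c(m)^{1-m}=b(m-1)$, $\mu=a(m-1)$) and the Feller-test treatment of the boundaries simply fill in the details the paper defers to \cite{ReimannSornette2010}.
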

	
The proof is based on the observation that the Lamberti transform of this process \ref{CEV} takes the form of a radial Ornstein Uhlenbeck process of order $\nu(m) = \frac{1}{2(m-1)}$, see \cite{ReimannSornette2010}. In this note we show that the CEV process eqn. \ref{CEV} with positive {\em state-to-diffusion} feedback ($m>1$) admits a stationary probability distribution, which is uni-modal and asymptotically decays as a power-law with its tail exponent proportional to $m$ only.

\begin{theorem}[{\bf Stationary pdf for the CEV process for $m > 1$}]
Let $dX_t = a X_t dt + b X_t^m dW_t$ [CEV] be defined on the non-negative reals $[0,\infty)$ with $a,b>0$ and $W_t$ the Standard Wiener process. Then, if $m > 1$ a stationary probability distribution exists and is similar (not equal) to a Type-2 Gumbel distribution
\begin{equation}\label{ps}
P(x) \; = \;  {\mathcal N} \: x^{-2m} \; e^{- c \: x^{-2(m-1)}},  
\end{equation}
where $c = \frac{2a}{b^2}\frac{1}{2(m-1)} > 0$ and ${\mathcal N} = \frac{2(m-1)}{c^{-\mu} \: \Gamma(\mu)}<\infty$ is a normalization constant with $\gamma = \frac{2m-1}{2(m-1)}>1$. The stationary pdf takes its unique maximum in
$
x_*  =  \left( \frac{b^2}{a} \frac{m }{m-1} \right)^{-\nu(m)}
$
where $\nu(m) = \frac{1}{2(m-1)}$ is the index of radial Ornstein Uhlenbeck process equivalent to eqn. \ref{CEV}. 
\end{theorem}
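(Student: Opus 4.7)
My plan is to derive the claimed stationary density directly from the Kolmogorov forward (Fokker--Planck) equation associated with the CEV SDE. For a one-dimensional Ito diffusion $dX_t = f(X_t)\,dt + g(X_t)\,dW_t$ on $(0,\infty)$, any stationary density $P$ satisfies
\begin{equation}
0 \;=\; -\frac{d}{dx}\bigl[f(x)P(x)\bigr] \;+\; \tfrac{1}{2}\frac{d^2}{dx^2}\bigl[g(x)^2 P(x)\bigr],
\end{equation}
and, under the natural zero-flux boundary condition at $0$ and $\infty$, the usual potential representation gives
\begin{equation}
P(x) \;=\; \frac{\mathcal N}{g(x)^2}\exp\!\left(2\int^{x}\frac{f(y)}{g(y)^2}\,dy\right).
\end{equation}
So my first step is simply to plug in $f(y)=ay$ and $g(y)^2=b^2 y^{2m}$, evaluate the antiderivative of $(a/b^2)y^{1-2m}$ (this is where positive state-to-diffusion feedback, $m>1$, is essential, as otherwise the integral diverges at infinity), and read off the advertised form $P(x)=\mathcal N x^{-2m}\exp(-c\,x^{-2(m-1)})$ with $c=a/[b^2(m-1)]$, matching the constant quoted in the statement.

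Next I will justify normalisability and compute $\mathcal N$. The substitution $u=c\,x^{-2(m-1)}$ turns $\int_0^\infty x^{-2m}e^{-c x^{-2(m-1)}}dx$ into $[2(m-1)]^{-1}c^{-\gamma}\int_0^\infty u^{\gamma-1}e^{-u}du$ with $\gamma=(2m-1)/[2(m-1)]$; since $m>1$ forces $\gamma>1>0$, the Gamma integral converges and yields $\mathcal N=2(m-1)/[c^{-\gamma}\Gamma(\gamma)]$ exactly as stated. Observe that the heavy-tail behaviour $P(x)\sim x^{-2m}$ as $x\to\infty$ is then immediate from $e^{-cx^{-2(m-1)}}\to 1$, and that the essential singularity at the origin forces $P(x)\to 0$ as $x\to 0$, giving uni-modality.

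For the mode I will differentiate $\log P(x)=\text{const}-2m\log x-cx^{-2(m-1)}$, set the derivative to zero, and solve $2m/x=2(m-1)c\,x^{-(2m-1)}$, i.e.\ $x^{2(m-1)}=(m-1)c/m$. Substituting $c=a/[b^2(m-1)]$ reduces the right-hand side to an explicit constant in $a,b,m$, so that $x_*$ is recovered in the power form $[\,\cdot\,]^{-\nu(m)}$ with $\nu(m)=1/[2(m-1)]$ from Theorem~\ref{MAIN}. Concavity of $\log P$ follows from the second derivative, confirming uniqueness of the maximum.

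The main obstacle I anticipate is not any one of the above computations, which are essentially routine, but rigorously certifying that (i) a stationary distribution actually exists and (ii) the ansatz above is the unique such density. The cleanest route is to invoke Theorem~\ref{MAIN}: since $X_t=c(m)\|\mathbf M(t)\|_2^{-1/(m-1)}$ with $\mathbf M$ a mean-reverting Ornstein--Uhlenbeck process in dimension $\delta(m)=2+1/(m-1)>2$, the radial part $\|\mathbf M\|_2$ admits a (chi-type) invariant law obtained from the Gaussian invariant law on $\mathbb R^{\delta}$; pushing it forward through the monotone map $r\mapsto c(m)r^{-1/(m-1)}$ gives a bona fide invariant law for $X_t$, which by uniqueness of the Fokker--Planck zero-current solution in this regime must coincide with $P$ above. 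Verifying that the transformed density agrees with (\ref{ps}) provides an independent cross-check and completes the proof.
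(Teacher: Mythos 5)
Your proposal is correct and takes essentially the same route as the paper: the paper's own proof consists only of the observation that the radial Ornstein--Uhlenbeck representation of Theorem~\ref{MAIN} makes $0$ and $\infty$ natural boundaries with zero probability current, so that the stationary density is the zero-flux solution of the Fokker--Planck equation --- precisely the computation (potential representation, Gamma-function normalization with $\gamma=\tfrac{2m-1}{2(m-1)}$, and tail asymptotics) that you carry out explicitly, with your push-forward of the chi-type invariant law serving as a consistent cross-check of the same machinery. The one caveat is that your mode computation, which is the correct one, yields $x_*=\left(\tfrac{b^2 m}{a}\right)^{-\nu(m)}$ rather than the paper's stated $x_*=\left(\tfrac{b^2}{a}\,\tfrac{m}{m-1}\right)^{-\nu(m)}$ (and your parenthetical claim that $\log P$ is globally concave is not literally true, though uniqueness of the maximum still follows from the unique critical point together with $P\to 0$ at both boundaries).
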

\begin{figure}[h]
        \includegraphics[width=2.5in]{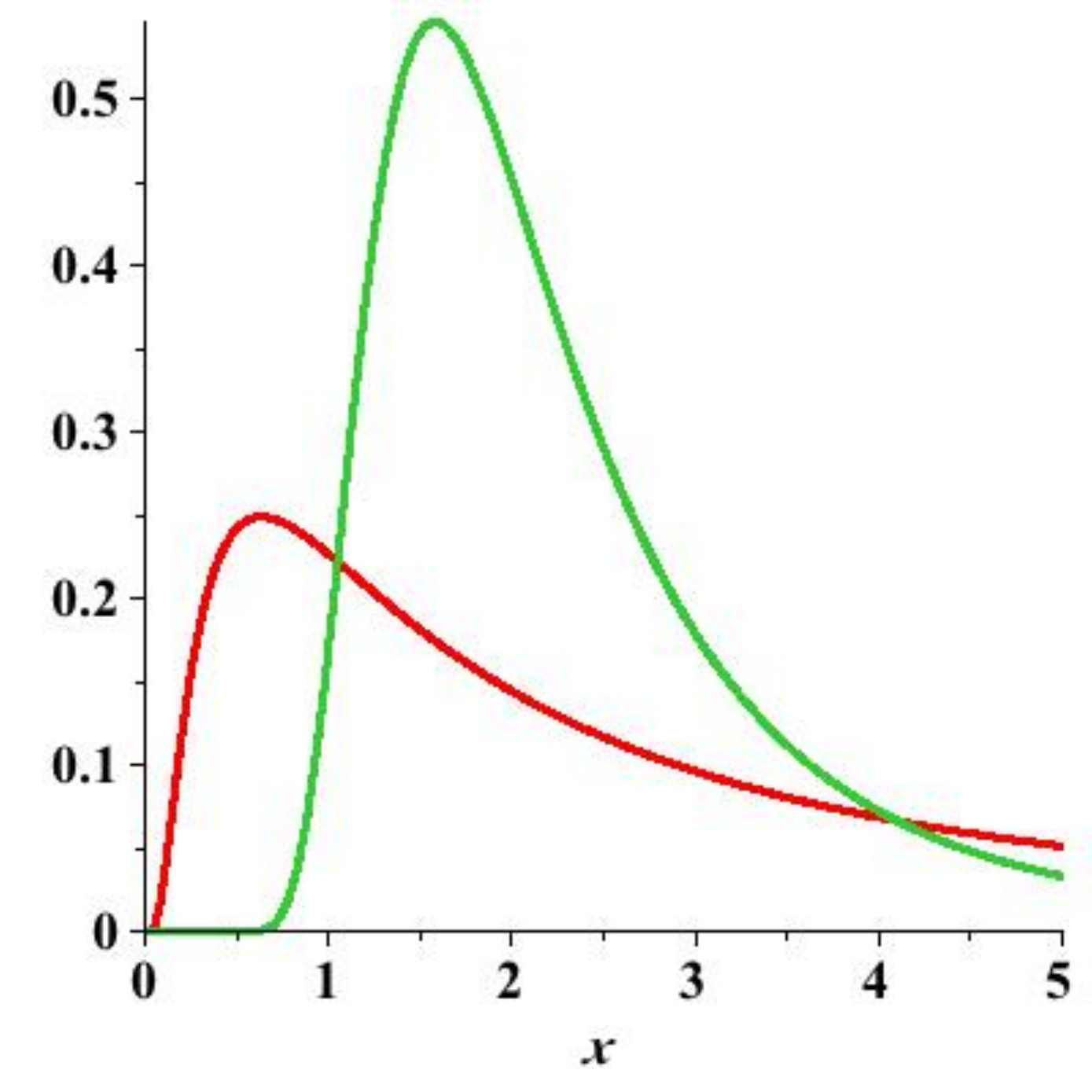}
                   \put(-205,97){\large $\bf P(x)$}
		\put(-150,100){\large $\bf m=\frac{3}{2}$}
		\put(-90,170){\large$\bf m=\frac{5}{2}$}
    \caption{The stationary pdf belonging to the Ito process $dX = a X dt + b X^m dW$ for $m=3/2$ (red) and $m=2$ (green), see eqn. \ref{ps}, defined on the non-negative reals.}
    \label{fig:stat_prob} 
 \end{figure}
\begin{proof}
{As shown in \cite{ReimannSornette2010} the solution of the CEV process [CEV] is an inverse power of a radial Ornstein Uhlenbeck process of dimension $d=\frac{2m-1}{m-1}>2$ for $1<m<\infty$, given by $X_t \propto \frac{1}{\|{\mathbf M}\|_2^\frac{1}{m-1}}$, whose components obey $dM_i = -a M_i dt + dW_i$. Since $0< \|{\mathbf M}\|_2<\infty$, $0$ and $\infty$ are natural boundaries for the CEV process with $m>1$, thus the probability current over these boundaries is zero. 
} 
\end{proof}

\noindent  
For $m = \frac{3}{2}$ the stationary probability distribution asymptotically decays as a power law $P(x) \; \sim \; \frac{1}{x^3}$ for large $x$, more precisely 
\begin{equation}\label{3}
P(x) \; = \; \frac{2a}{b^2} \; \frac{1}{x^3} \; e^{-\frac{2a}{b^2} \frac{1}{x}} \; \sim \; \frac{1}{x^3}
\end{equation}
The pdf is shown in Fig. \ref{fig:CEV_stat_PDF}.

\begin{figure}[h]
\hskip -0.5cm
        \includegraphics[width=3.1in]{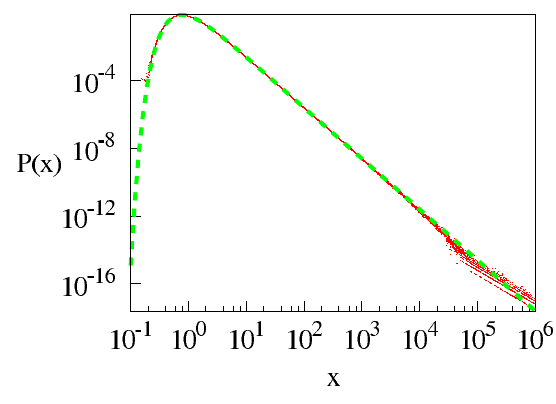}
    \caption{The stationary pdf belonging to the CEV process $dX = a X dt + b X^\frac{3}{2} dW$, with {\sc positive feedback} on `state-to-diffusion. Simulated data in green, while the red dashed curve is due to eqn. \ref{3}.}
    \label{fig:CEV_stat_PDF} 
 \end{figure}

The equivalence between the CEV model with $m>1$ and the rOU process of index $\nu(m) = \frac{1}{2(m-1)}$ shows also up in the functional form of the stationary pdf. In fact the distribution can be rewritten as 
$$
P(x) \; \propto \; x^{-(2+1/\nu(m))} \exp \left[ -\nu(m)  \;  x^{-\frac{1}{\nu(m)}\frac{1}{x}}\right]
$$
which shows that the exponential part of the pdf is controlled by the order of the corresponding rOU process. Furthermore, the stationary probability distribution belonging to the CEV process with $m>1$ is uni-modal for all $b$ and decays for large $x$ as a power-law, i.e. $P(x) \sim\frac{1}{x^{2m}}$. Its graph is sketched in Fig \ref{fig:stat_prob}. As an immediate consequence we have

 \begin{theorem}
Given the CEV process with feedback parameter $m>1$ and let $\nu(m) = \frac{1}{2(m-1)}$. Then
\begin{equation}\label{S}
S(f) \;\sim\;\frac{1}{f^\beta}, 	\quad \beta(m) =  2 - (1+\epsilon) \nu(m),
\end{equation}
where $\epsilon>0$ is a small constant. Consequently $-\infty < \beta(m) < 2$ for $1 < m < \infty$. 
 \end{theorem}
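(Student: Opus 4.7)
My plan is to follow the Wiener--Khinchin / Ruseckas--Kaulakys strategy that converts a power-law stationary pdf together with a scaling relaxation time into a power-law power spectral density. The ingredients already available from the previous Result are the pdf tail $P(x)\sim x^{-2m}$ and the equivalence of the CEV process to a radial Ornstein--Uhlenbeck process of order $\nu(m)$; what remains is to extract a characteristic relaxation time $\tau(x)$ at each level $x$ and to assemble the spectrum from Lorentzian contributions at those timescales.

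First I would derive the local relaxation time by balancing the diffusive increment $b^{2} x^{2m}\,\Delta t$ against $x^{2}$ in the regime $x \gg x_*$ where, as observed in the discussion preceding Result~2, the diffusion dominates the linear drift. This yields
\begin{equation}
\tau(x) \;\sim\; \frac{1}{b^{2}\, x^{\,2(m-1)}},
\end{equation}
which is exactly the characteristic time selected by the self-similar rescaling $X\mapsto cX,\; t\mapsto c^{2(m-1)}t$ that leaves the state-to-diffusion part of eqn.~\ref{CEV} invariant.

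Next I would express the spectrum as the pdf-weighted superposition of Lorentzian contributions,
\begin{equation}
S(f) \;\sim\; \int_{x_{\min}}^{x_{\max}} P(x)\,\frac{2\tau(x)}{1+(2\pi f\,\tau(x))^{2}}\, dx,
\end{equation}
perform the change of variables $u = 2\pi f\,\tau(x)$, and factor the explicit $f$-dependence out of the integrand. Bookkeeping of the exponents of $P(x)\sim x^{-2m}$ and $\tau(x)\sim x^{-2(m-1)}$ then produces a prefactor $f^{-\beta}$ with leading-order exponent $\beta = 2 - \nu(m)$, promoted to $\beta = 2 - (1+\epsilon)\nu(m)$ once the subleading cutoff corrections are absorbed into $\epsilon>0$. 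The claimed range $-\infty < \beta(m) < 2$ on $m\in(1,\infty)$ then follows from the monotonicity of $\nu$ and positivity of $\epsilon$.

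The main obstacle I anticipate is tracking $\epsilon$ rigorously: the self-similar change of variables is exact only in the bulk of the pdf, whereas the low-frequency end of the spectrum probes the non-scaling region near the restoring-drift scale $x_* \sim (a/b^{2})^{\nu(m)}$, and the high-frequency end probes the cutoff provided by the finite observation window. I would handle this by a matched-asymptotic analysis of the Lorentzian integral against these two cutoff scales, identifying $\epsilon$ with the logarithmic or sub-power correction induced by finite-window and finite-drift effects; it is precisely this correction that enforces the strict inequality $\beta(m)<2$ rather than an equality, and making it quantitative is the technical heart of the argument.
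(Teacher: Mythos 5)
Your overall strategy is legitimate and is in fact closer to a self-contained derivation than what the paper does: the paper simply defers to the gCEV case, whose proof consists of matching the process to eq.~(3) of Ruseckas--Kaulakys (by writing the drift for large $X$ as $cX^{2m-1}$ with a small residual constant $c>0$) and then quoting their formula $\beta = 1+\frac{\lambda-3}{2(m-1)}$, with $\epsilon=2c$ arising from the shift of the effective pdf exponent from $2m$ to $2(m-c)$ --- not, as you propose, from finite-window or cutoff corrections. Your relaxation-time estimate $\tau(x)\sim b^{-2}x^{-2(m-1)}$ is correct and is exactly the scaling underlying the cited formula.

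However, there is a concrete error in the central step. The superposition of Lorentzians must be weighted by the contribution of level $x$ to the signal \emph{variance}, i.e.\ by $x^{2}P(x)\,dx$, not by $P(x)\,dx$ alone. Carrying out your own change of variables $u=2\pi f\tau(x)$ on the integral as you wrote it, with $P(x)\sim x^{-2m}$ and $\tau(x)\sim x^{-2(m-1)}$, gives
\begin{equation}
S(f)\;\sim\;f^{-\left(1+\frac{2m-1}{2(m-1)}\right)}\;=\;f^{-(2+\nu(m))},
\end{equation}
i.e.\ $\beta = 2+\nu(m) > 2$, which contradicts both the claimed leading order and the bound $\beta<2$. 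Restoring the $x^{2}$ weight contributes an extra factor $f^{2/(2(m-1))}=f^{2\nu(m)}$ after the substitution and yields $\beta = 2-\nu(m)$, recovering the Ruseckas--Kaulakys exponent $1+\frac{\lambda-3}{2(m-1)}$ with $\lambda=2m$. So the bookkeeping you describe does not produce the exponent you assert from the integral you wrote down; the missing variance weight is not a subleading correction but flips the sign of the $\nu(m)$ term. Once that is fixed, the remaining issue is only the provenance of $\epsilon$, where your matched-asymptotics proposal is a plausible but different mechanism from the paper's identification of $\epsilon$ with the residual drift coefficient.
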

This follows from the more general case, see below. Note that for $m=3/2$ and $c=0$, $\beta =1$, so that in this case $S(f) \sim \frac{1}{f}$, see Fig. \ref{fig:CEV_S(f)}.
\begin{figure}[h]
        \includegraphics[width=3in]{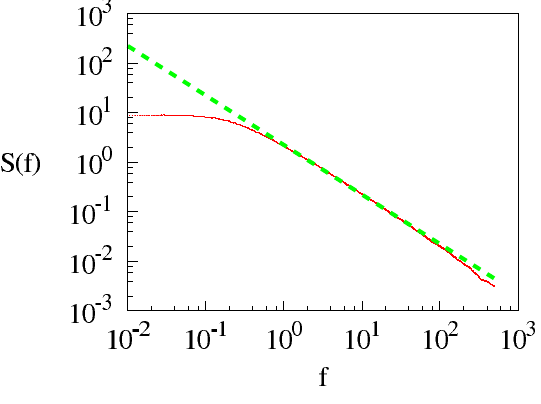}
    \caption{The power spectral density of the CEV process $dX = a X dt + b X^\frac{3}{2} dW$. Simulated density in red, while the green dashed lines is the $1/f$ decay.}
    \label{fig:CEV_S(f)} 
 \end{figure}
\subsection{The generalized CEV process with positive feedback $n,m>1$}

The interplay between state-to-price feedback and state-to-diffusion becomes obvious when considering the Fokker-Planck equation belonging to the gCEV process. Note that by transforming $X_t$ into $Y_t = \frac{1}{b(m-1)} \frac{1}{X^{m-1}}$, the gCEV process becomes $dY_t = \tilde{a}(Y_t) + dW_t$ with drift
{\small
\begin{equation}
\tilde{a}(Y_t) = \frac{1}{b}\; \left[  a \Big( b(m-1) \: Y_t\Big)^\frac{n-m}{1-m}- \frac{m}{m-1} \frac{1}{Y_t}\right]
\end{equation}
}The corresponding potential $U(y) = - \int^y \: \tilde{a}(y') dy'$ of the corresponding Fokker-Planck equation 
is of the form
$$
U(y) \; \propto \; \frac{1}{\eta} \: y^ \eta \; - \; \ln \: y
$$
where $\eta = \frac{n-(2m-1)}{1-m}\ge 0$ if $n\le 2m-1$ and positive otherwise. Thus there is a bifurcation occurring at $n_* = 2m-1$, such that $U(y)$ is convex for $n<n^*$ and concave for $n>n*$, see Fig \ref{fig:U(y)}. Particularly, for $n<n_*$, $0$ and $\infty$ are repelling and the potential has a unique minimum. On the other hand, for $n>n_*$, both $0$ and $\infty$ are attracting, while $U(y)\sim - \frac{1}{{y}^c} \to -\infty$ with some positive $c$ for $y$ approaching $0$. In terms of $X$ this means that for $n>n_*$, that small $X_t \to 0$, while large $X_t \to \infty$. 

\begin{figure}[htbp] 
   \centering
   \includegraphics[width=5.8cm]{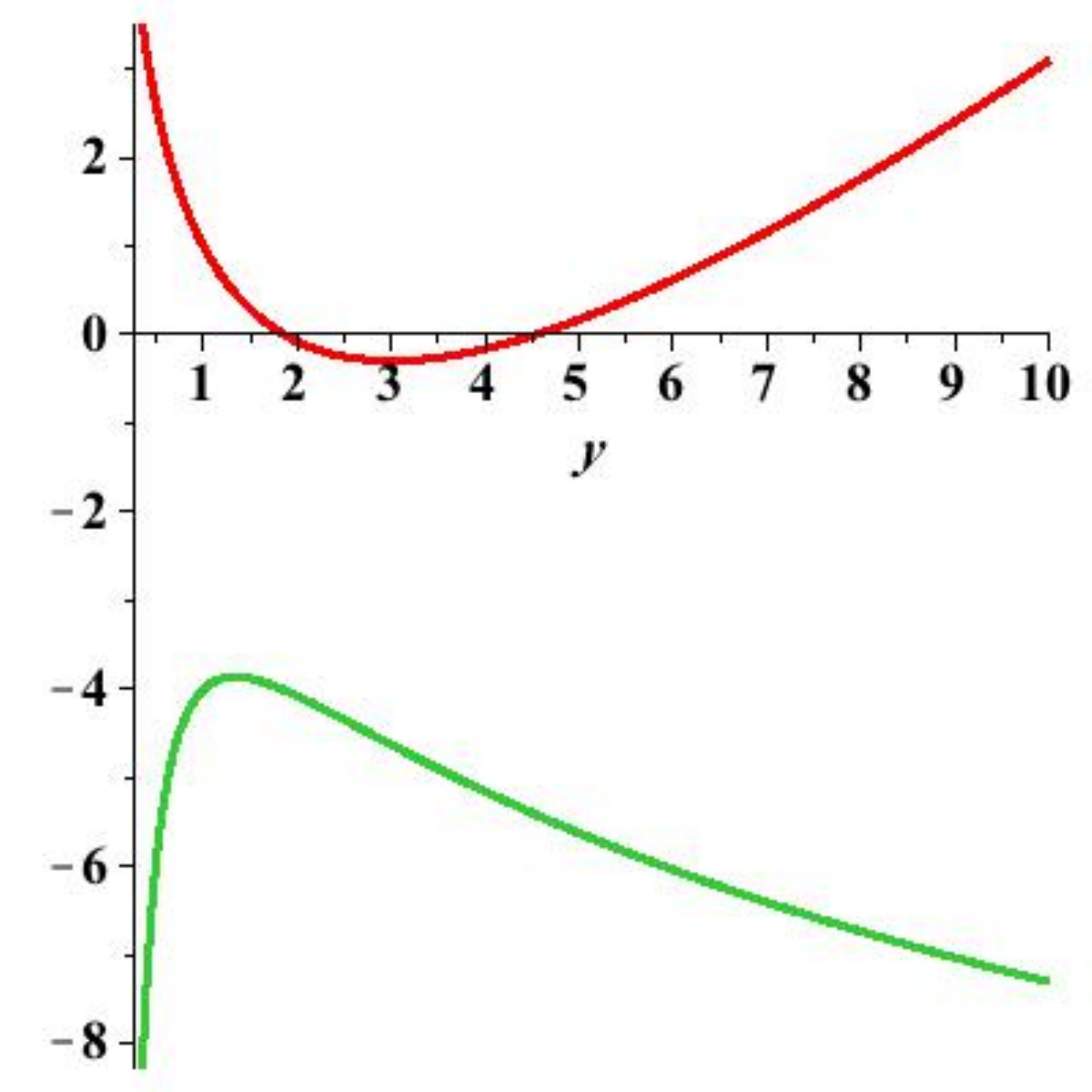}
   \put(-185,97){\large $\bf U(y)$}
   \put(-100,160){\large \bf $n<n_*$}
   \put(-100,60){\large \bf $n>n_*$}
    \caption{Potential $U(y)$ of the Lamperti transformed gCEV process; the upper curve corresponds to $n<n_*$, while the lower one is for $n>n_*$.}
  \label{fig:U(y)}
\end{figure} 

While for $n<n_*$ the gCEV process can be regarded as a diffusion trapped in a convex potential, i.e. behaves locally similar to an Ornstein-Uhlenbeck process, one can expect that in this case it admits a stationary probability density.  

\begin{theorem}
The generalized CEV process admits a stationary probability distribution $P(x)$ if
$m>1$ and $n < 2m-1$. The stationary probability distribution yields
$
P(x) \; = \;  {\mathcal N} \: \frac{1}{x^{2m}} \: \exp\left( -c \: x^{-(2m-n-1)}\right)$, where $c = \frac{2a}{b^2} \frac{1}{2m-n-1} > 0,
$
with normalization constant ${\mathcal N} = \frac{2m+1-n}{c^{-\mu}\Gamma(\mu)}$, $\mu = \frac{2m-1}{(2m-1)-n}$ and thus asymptotically decays as a power law with tail exponent $\mu > 2$
$$
P(x) \; \sim \; \frac{1}{x^\mu},\qquad \mu>2.
$$
\end{theorem}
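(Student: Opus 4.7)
The plan is to solve the stationary Fokker-Planck equation associated with the gCEV directly. Because the analysis of the Lamperti-transformed process $Y_t = [b(m-1)]^{-1} X_t^{-(m-1)}$ shows that the effective potential $U(y)$ is convex precisely when $n < 2m-1$ (see Fig.~\ref{fig:U(y)}), both $0$ and $\infty$ are inaccessible natural boundaries of $X_t$ in this regime. The probability current therefore vanishes in the stationary state, and the Fokker-Planck equation reduces to the first-order ODE
\begin{equation*}
a x^n P(x) \;=\; \tfrac{1}{2}\frac{d}{dx}\bigl[b^2 x^{2m} P(x)\bigr],
\end{equation*}
whose general solution is the standard formula $P(x) \propto g(x)^{-2}\exp\!\bigl(2\int^x f(y)/g(y)^2\,dy\bigr)$ with $f(y) = a y^n$ and $g(y)^2 = b^2 y^{2m}$.

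Next I would compute the integral $\int^x (2a/b^2)\, y^{n-2m}\,dy$. The hypothesis $n < 2m-1$ is precisely what makes the exponent $n-2m+1$ strictly negative, so the antiderivative equals $-c\,x^{-(2m-n-1)}$ for a positive constant $c = 2a/[b^2(2m-n-1)]$, and the prefactor $g(x)^{-2}\propto x^{-2m}$ produces the stated functional form. Had instead $n > 2m-1$ held, the same computation would give an exponential \emph{growing} at infinity, so no integrable stationary density would exist; the boundary case $n=2m-1$ yields a pure logarithm, reducing $P$ to a power law whose exponent generically fails integrability. This explains the sharp bifurcation at $n_* = 2m-1$ and matches the convex/concave dichotomy of $U(y)$.

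I then need to verify that the candidate density is genuinely normalizable and to compute $\mathcal{N}$. Near $\infty$ the exponential factor tends to $1$ and the tail $x^{-2m}$ is integrable because $m > 1$. Near $0$ the factor $\exp(-c x^{-(2m-n-1)})$ decays super-polynomially and dominates any algebraic singularity. The normalization integral is evaluated by the substitution $u = c\, x^{-(2m-n-1)}$, which turns it into a standard Gamma integral of the form $\int_0^\infty u^{\mu-1} e^{-u}\,du$ up to a power of $c$, with $\mu = (2m-1)/(2m-1-n)$; reading off the prefactor as $x\to\infty$ then yields the asserted tail, and the conditions $m>1$ and $n<2m-1$ together guarantee $\mu > 2$.

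The main obstacle is not the explicit integration, which is essentially forced once the current vanishes, but the rigorous justification that the zero-current ansatz really gives the stationary regime. This requires a Feller-type boundary analysis on $Y_t$ ensuring that both $0$ and $\infty$ are entrance/unattainable in the regime $n<2m-1$, so that the qualitative picture of the confining potential $U(y)$ translates into finiteness of the relevant Feller integrals; once this is established, uniqueness of the invariant law follows from the irreducibility of the diffusion on $(0,\infty)$.
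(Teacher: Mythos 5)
Your route --- vanishing probability current in the stationary Fokker--Planck equation, justified by the natural/entrance character of the boundaries read off from the convexity of the Lamperti-transformed potential for $n<2m-1$, followed by the standard quadrature $P(x)\propto g(x)^{-2}\exp\bigl(2\int^x f/g^2\bigr)$ and a Gamma-function substitution for the normalization --- is exactly the argument the paper uses (explicitly for the $n=1$ CEV case and implicitly for the general theorem, for which no proof is printed), and your added remark that the zero-current ansatz needs a Feller boundary classification is the only place where rigor is genuinely owed. Note that your computation in fact yields $\mathcal{N}=\frac{2m-n-1}{c^{-\mu}\Gamma(\mu)}$ and an asymptotic tail $P(x)\sim x^{-2m}$, which quietly corrects two slips in the stated theorem: the paper's numerator $2m+1-n$ and its identification of the tail exponent with $\mu=\frac{2m-1}{(2m-1)-n}$ (the Gamma-function argument) rather than with $2m$, which is the exponent consistent with the abstract and with $\mu>2$ following from $m>1$.
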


Note that this excludes the Geometric Brownian Motion case $[n=m=1]$, while it includes the case that the process exhibits positive feedback of both: state-to-drift feedback $[n>1]$ as well as state-to-diffusion $[m>1]$. In fact, given the degree $n>1$ of positive state-to-drift feedback, then the degree of state-to-diffusion feedback $m$ must be positive as well while sufficiently large $m> \frac{1}{2}(n-1)$.\\

Fig \ref{fig: gCEV_stat_PDF} shows the empirical distribution $P$ (red) compared to the analytical solution, see eqn \ref{ps}, (green dashed lines) for the case $m=3/2$ and $n=4/3$, i.e. for the gCEV process in which both partial processes, eqnn \ref{det} and \ref{stoch} exhibit positive feedback. 

\begin{figure}[h]
        \includegraphics[width=2.9in]{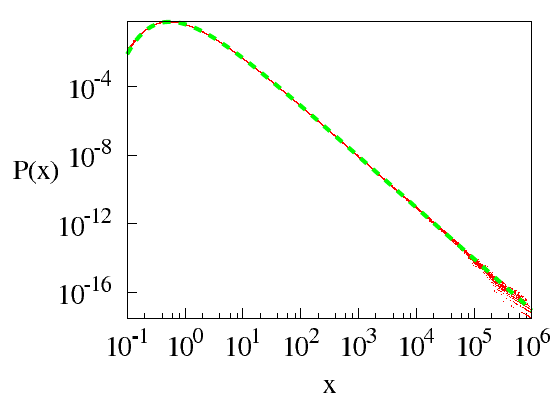}
    \caption{The stationary pdf belonging to the gCEV process $dX = a X^\frac{4}{3} dt + b X^\frac{3}{2} dW$, with {\sc positive feedback} on both `state-to-drift' and `state-to-diffusion.}
    \label{fig: gCEV_stat_PDF} 
 \end{figure}
The power spectrum density of a gCEV asymptotically decays as a power-law with tail index $\beta$, which is a direct function of the index $\nu$ of the related Bessel process and is, particularly independent of the feedback exponent $n$ of the drift term.
\begin{theorem}
The gCVE process eqn \ref{CEV} with feedback parameters $m>1, n<2m-1$ admits a power spectrum  
\begin{equation}\label{S(f)}
S(f) \; \sim \; \frac{1}{f^\beta}, \quad \beta = 2 \; - \; (1+\epsilon) \: \nu(m),
\end{equation}
where $\epsilon>0$ and $\nu(m) = \frac{1}{2(m-1)}$ is the index of the radial Ornstein-Uhlenbeck process equivalent to the CEV process with $m>1$ and $c>0$ is a small parameter.
\end{theorem}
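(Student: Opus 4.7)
My plan is to derive the power spectrum from the asymptotic self-similarity of the multiplicative noise part of the gCEV, using the stationary distribution supplied by the preceding theorem and the Wiener--Khinchin theorem, essentially following the argument of Ruseckas and Kaulakys. One writes
\begin{equation*}
S(f) \;=\; 4\int_0^\infty \cos(2\pi f\tau)\,C(\tau)\,d\tau,
\end{equation*}
with $C(\tau) = \int\!\!\int x_0 x\,P_{st}(x_0)\,\Pi(x,\tau\mid x_0)\,dx\,dx_0$, and the task is to extract the power-law regime of $C(\tau)$ from the power-law tail $P_{st}(x_0) \sim x_0^{-2m}$ and the scaling of the transition kernel $\Pi$.

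The key input is the scaling $X \to \lambda X$ combined with $t \to \lambda^{-2(m-1)} t$. This transformation leaves the diffusion term $bX^m\,dW$ invariant and rescales the drift term $aX^n\,dt$ by $\lambda^{2m-n-1}$; since $n<2m-1$, the drift is suppressed in the large-$X$ sector, so the scaling is an asymptotic symmetry. It forces the transition kernel to take the form $\Pi(x,\tau\mid x_0) \simeq x_0^{-1}\widetilde{\Pi}\bigl(x/x_0,\,\tau x_0^{2(m-1)}\bigr)$, where $\widetilde{\Pi}$ is the transition density of the drift-free companion SDE started at $1$. Equivalently, a fluctuation at amplitude $x_0$ has intrinsic relaxation time $\tau(x_0) \sim 1/(b^2 x_0^{2(m-1)})$ and contributes to the spectrum predominantly near the frequency $f(x_0) \sim b^2 x_0^{2(m-1)}$.

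Summing these independent scale contributions, the spectral weight in a band around $f$ scales as $x_0^2\,P_{st}(x_0)\,dx_0$; the change of variables $x_0 \sim f^{\nu(m)}$ with $\nu(m) = 1/(2(m-1))$, together with $P_{st}(x_0) \sim x_0^{-2m}$, gives $S(f) \sim f^{\nu(m)-2}$, i.e.\ $\beta = 2 - \nu(m)$ at leading order. The same conclusion is recovered by substituting the scaling form of $\Pi$ into $C(\tau)$, introducing $\sigma = \tau x_0^{2(m-1)}$, and inverse Fourier transforming the resulting power law in $\tau$.

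I expect the main obstacle to be the $(1+\epsilon)$ correction. It originates from two effects that the idealised scaling ignores: the deviation of $P_{st}$ from the pure $x^{-2m}$ tail at small $x_0$, where the factor $\exp(-c\,x^{-(2m-n-1)})$ from the previous theorem cuts off the density; and the subleading drift contribution to $\Pi$ on intermediate scales. A matched-asymptotics analysis bridging the diffusion-dominated window and the drift-stabilised window should produce the small positive $\epsilon$, and pinning it down sharply, together with delimiting the frequency range over which the $1/f^\beta$ law holds, is where the real technical work lies.
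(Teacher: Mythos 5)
Your route is, at bottom, the same as the paper's: both arguments reduce the large-$X$ behaviour of the gCEV to the Ruseckas--Kaulakys class $dX_t = c\,X_t^{2m-1}\,dt + b\,X_t^m\,dW_t$ and then invoke their spectral scaling. The difference is that the paper simply quotes eqn.~(33) of \cite{RuseckasKaulakys2010} after observing that $aX^n\,dt$ can be rewritten as $c(X)X^{2m-1}\,dt$ with $c(X)$ small and nearly constant at large $X$ (since $n<2m-1$), whereas you unpack the derivation behind that citation: Wiener--Khinchin, the asymptotic scale invariance $X\to\lambda X$, $t\to\lambda^{-2(m-1)}t$, the relaxation time $\tau(x_0)\sim x_0^{-2(m-1)}$, and the superposition of contributions weighted by $x_0^2 P_{st}(x_0)$. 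Your leading-order bookkeeping is correct and reproduces $\beta = 2-\nu(m)$, consistent with plugging the tail exponent $\lambda=2m$ into the Ruseckas--Kaulakys formula $\beta = 1+\frac{\lambda-3}{2(m-1)}$. Where you stop short is the $(1+\epsilon)$ factor: you flag it as an unresolved matched-asymptotics correction, while the paper pins it down as $\epsilon = 2c$, $c>0$ being the effective drift coefficient of the approximating SDE --- equivalently, that SDE has tail exponent $2(m-c)$ rather than $2m$, which shifts $\beta$ by $2c\,\nu(m)$. Your intuition about the origin of $\epsilon$ (residual drift on intermediate scales) is the right one, and indeed since $c(X)\to 0$ as $X\to\infty$ the paper's $\epsilon$ is itself only an effective, scale-dependent constant; neither treatment is rigorous about the frequency window over which the $1/f^\beta$ law holds, so this is a difference of bookkeeping rather than a gap on your side relative to the paper's own standard.
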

\begin{proof}
{\em The proof follows from a results in \cite{RuseckasKaulakys2010} by noting that the gCEV process for large $X_t$ can be approximated by $dX_t = c(X) X_t^{2m-1} dt + b X_t^m dW_t$, where for $n<2m+1$ the coefficient $c(X) = \frac{a}{m b^2 } X_t^{n-(2m+1)}$ is almost constant for large $X_t$ and approaches $0$ from above if $X_t \to \infty$. Therefore approximating the gCEV process for large but finite $X$ by
\begin{equation}\label{approx}
dX_t \; = \; c X_t^{2m-1} dt \; + \; b X_t^m \: dW_t, \qquad c>0
\end{equation}
we obtain eq. (3) in \cite{RuseckasKaulakys2010} with the substitution $c = \left(m-\frac{\nu}{2}\right)$. According to eqn. (33) in \cite{RuseckasKaulakys2010} the spectral density reads $S(f) \sim \frac{1}{f^\beta}$, $\beta = 1+\frac{\nu-3}{2(m-1)}$. Inserting $\nu = 2(m-c)$ we obtain that for the process in eqn \ref{approx} 
$
\beta(m) = 2 - \frac{1+2c}{2(m-1)} 
$
which gives the result eqn \ref{S(f)} putting $\nu(m) = \frac{1}{2(m-1)}$ and $\epsilon=2c$.
}
\end{proof}
Note that for $m=\frac{3}{2}$ and $c=0$, the power-spectrum shows pure $1/f$ behavior, see Fig. \ref{fig:gCEV_S(f)}, while for $m=\frac{5}{4}$, the spectrum is flat, $\beta=0$. 
\begin{figure}[h]
        \includegraphics[width=3in]{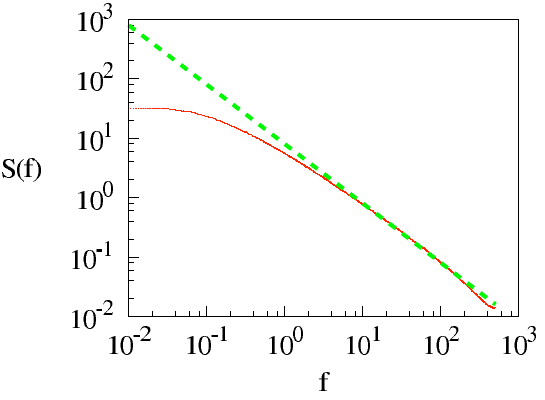}
    \caption{The power spectral density $S(f) \sim \frac{1}{f^1}$ belonging to the gCEV process $dX = a X^\frac{4}{3} dt + b X^\frac{3}{2} dW$.}
    \label{fig:gCEV_S(f)} 
 \end{figure}
The $n$-dependence of the spectral density shows up only for small $f$. One can show that $S(f)$ for small $f$ is an increasing function of $n$. \\

\section{Bursts generated in the gCEV process}

Regarding the transformed gCEV process $Y_t$ for $n<n_*$ as a diffusion being trapped in a convex potential $U(y)$ as in Fig. \ref{fig:U(y)} makes clear that the dynamics of $X_t$ allows for a sequence of arbitrary high but finite outbursts even on short time scales, in agreement with Fig \ref{fig:CEV_trail}. Since the bursting behavior, i.e. $X_t$ large, is governed by the state-to-diffusion feedback parameter $m$, we can restrict ourselves to the case $n=1$ for investigating statistical properties of burst. That is, we will numerically consider the CEV process 
$$
dX_t \; = \; a X_t \: dt + b \: X_t^\frac{3}{2} \: dW 
$$
in the following. Kaulakys and Alaburda \cite{KaulakysAlaburda2009} considered the case $m=2$ in
$dX =\frac{a}{b^2} \; x ^{2m-1} dt + x^{m} dW$, for $x \ge x_m > 0$. Note that is the case $X$ is distributed according to a power law with tail exponent $2\left( m - \frac{a}{b^2}\right)$, as follows from eqn \ref{PL}. In this particular setting they found numerical evidences for clear power-law statistics of bursts. Since in our case, power-law behavior only exists asymptotically, we can expect power-law burst statistics only asymptotically.\\

A {\sc burst} is regarded as a super-threshold event: Let $(X_t)$ be the solution of a gCEV process. The {\em burst interval} $T_k$ of the k-th burst is defined as the time interval between crossing the threshold $\underline{x}>0$ from below and the smallest time at which the threshold is crossed back from above. By a slight abuse of notation we also denote the length of this burst period by $T_k$. In Fig \ref{fig:cev_int} its probability distributions $P_{\underline{x}}(T)=P(T_k = T\:|\: \underline{x})$ are shown for different threshold values: red $ \underline{x} = 2$, green $ \underline{x}=4$, blue $ \underline{x}=8$. The distribution $P(T)$ of burst durations admits an intermediary power-law regime with $\frac{1}{T^\frac{3}{2}}$.  
 \begin{figure}[h]
        \includegraphics[width=3.0in,]{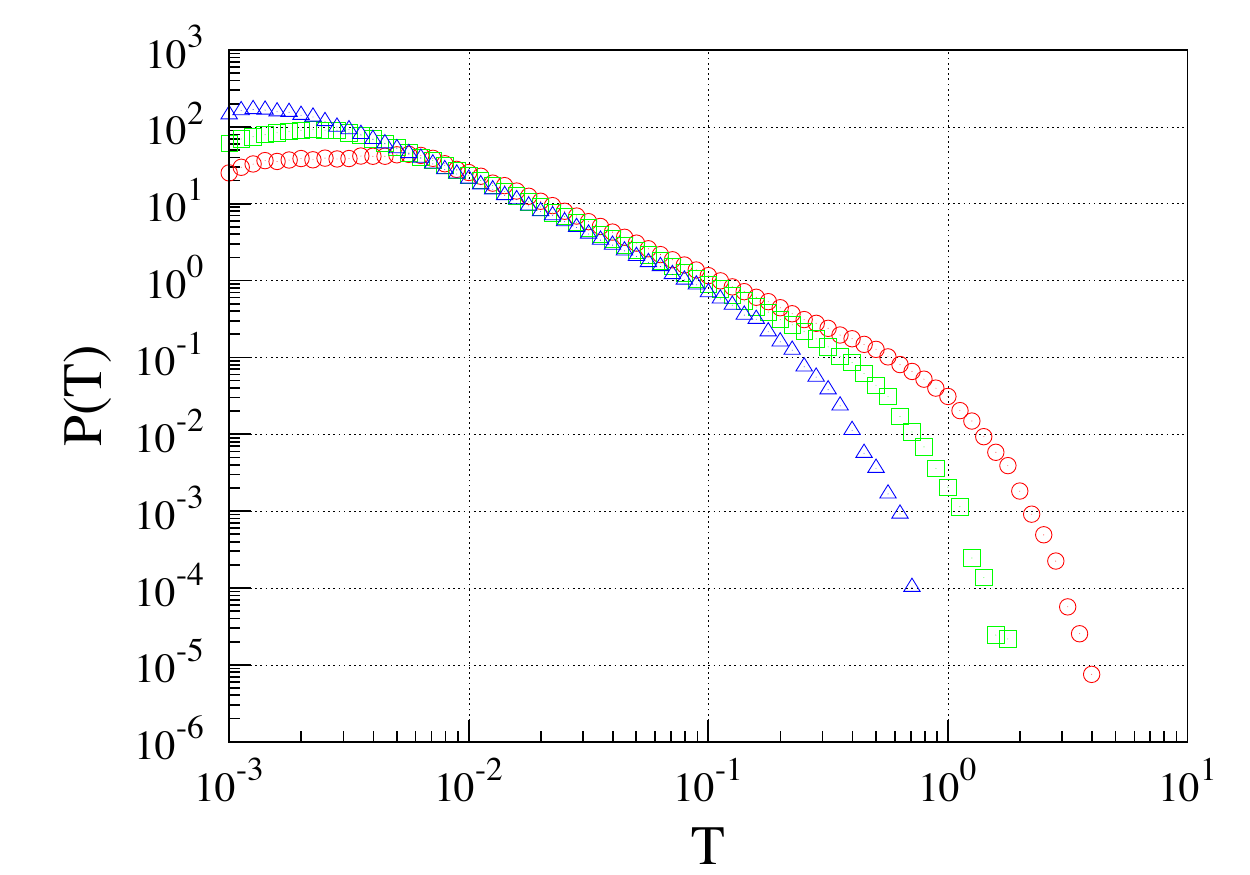}
    \caption{Intermediary power-law regime in $P_{\underline{x}}(T)$ for different threshold values: $\underline{x}=2$ (red circles), $\underline{x}=4$ (green squares), $\underline{x}=8$ (blue triangles).}
    \label{fig:cev_int} 
 \end{figure}

The size of a burst is defined as $S = \int_{T} X_t \: dt$, i.e. the integral over the super-threshold trajectory in the burst period $T$. It turns out to be related to the burst duration $T$ by
 \begin{equation}
 S \; \sim \; T^2 
 \end{equation}
  \begin{figure}[h]
    \includegraphics[width=3.0in]{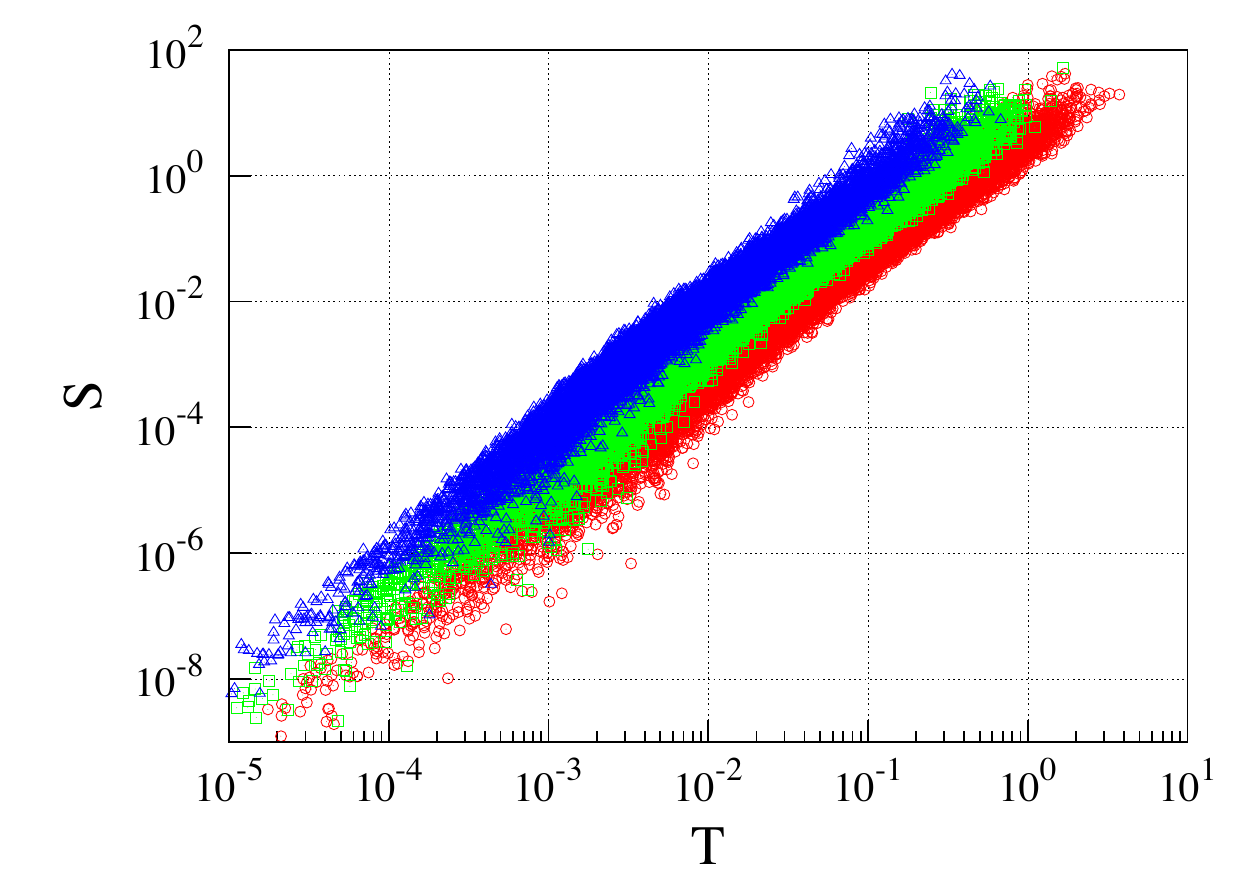}
   \put(-40,150){\large $\underline{x}=4$}
   \put(-130,110){\large \bf $\underline{x}=8$}
   \put(-80,65){\large \bf$\underline{x}=2$}
  \caption{Size $S$ of a burst and its duration $T$ for three thresholds, $\underline{x}=2$ (red circles), $\underline{x}=4$ (green squares), $\underline{x}=8$ (blue triangles).}
  \label{fig:cev_t2s}
 \end{figure}
 
\section{Acknowledgement}
SR is deeply grateful to D. Sornette for numerous helpful discussions about CEV processes and pointing towards the importance of positive feedback. Authors acknowledge the support by EU COST Action MP 0801.

\end{document}